\def\be{\begin{equation}}
\def\ee{\end{equation}}
\def\ba{\begin{array}}
\def\ea{\end{array}}
\def\qed{\leavevmode\unskip\penalty9999 \hbox{}\nobreak\hfill
     \quad\hbox{\leavevmode  \hbox to.77778em{%
               \hfil\vrule   \vbox to.675em%
               {\hrule width.6em\vfil\hrule}\vrule\hfil}}
     \par\vskip3pt}
\newtheorem{theorem}{Theorem}
\newtheorem{lemma}{Lemma}
\begin{document}
\title{Criterion for SLOCC Equivalence of Multipartite Quantum States}
\author{Tinggui Zhang$^{1}$}
\author{Ming-Jing Zhao$^{2}$}
\author{Xiaofen Huang$^{1\ast}$}
\affiliation{ $^1$School of Mathematics and Statistics, Hainan
Normal University, Haikou, 571158, China\\
$^2$Department of Mathematics, School of Science, Beijing
Information Science and Technology University, 100192, Beijing,
China\\
$\ast$Corresponding author, e-mail address:huangxf1206@163.com}

\begin{abstract}
We study the stochastic local operation and classical communication
(SLOCC) equivalence for arbitrary dimensional multipartite quantum
states. For multipartite pure states, we present a necessary and sufficient criterion
 in terms of their coefficient matrices. This condition can be used to classify some SLOCC equivalent
 quantum states with coefficient matrices having the same rank. For multipartite mixed state,
 we provide a necessary and sufficient condition by means of the realignment of matrix.
 Some detailed examples are given to identify the SLOCC equivalence of multipartite quantum states.
\end{abstract}

\pacs{03.67.-a, 02.20.Hj, 03.65.-w} \maketitle

\section{Introduction}

 Quantum entanglement is not only a prime feature in
quantum mechanics but also an important resource in quantum
information processes \cite{nils,rpmk}. It can be used in quantum
teleportation \cite{cgcr,dsfl}, superdense coding \cite{chsj,khpa},
quantum computation \cite{ddeu,pwsh,aerj,lkgr}, quantum key
distribution \cite{chgb,ngwh} and etc. Therefore, it is important to understand what kind of entanglement a
given quantum state has. One approach to classify entanglement is by means of Statistic local operations and classical communications (SLOCC)
\cite{csdj}. Entanglement in bipartite pure states has been well understood, while many questions are still open for the
mixed states and multipartite states.

It has been shown that two pure states $|\varphi\rangle$ and
$|\psi\rangle$ in ${\cal H}_1\otimes{\cal H}_2\otimes\cdots\otimes{\cal H}_K$, ${\rm dim}{\cal H}_i=n_i$, $i=1,2,\cdots,K$,
are SLOCC equivalent if and only if they can be converted into each other with
the tensor products of invertible local operators(ILOs)
\begin{equation}
|\varphi\rangle=A_1\otimes A_2 \otimes \cdots \otimes A_K|\psi\rangle.
\end{equation}
Correspondingly, two mixed states $\rho$ and $\rho^{\prime}$ belong to
the same class under SLOCC if and only if they are converted by ILOs
with nonzero determinant, that is,
\begin{equation}
\rho^{\prime}=(A_1\otimes
A_2\otimes \cdots \otimes A_K)\rho (A_1\otimes A_2\otimes \cdots
\otimes A_K)^{\dag},
\end{equation}
where $A_i$ is ILO in $GL(n_i,\mathbb{C})$ for
each $i$ \cite{wgji}. Many researches have been conducted
on entanglement classification under SLOCC since the
beginning of this century
\cite{wgji,adma,cejs,fjbh,lcyx,jscf,blhf,ocjs,lili,syjj,ggnr}.
In three-qubit system, all pure states are classified into six types
\cite{wgji}. This classification can be extended to three-qubit
mixed states \cite{adma}. Even though, it is still a very difficult
problem to find a SLOCC class of a given three-qubit mixed state
except for a few rare case. For instance, a complete SLOCC
classification for the set of the GHZ-symmetric states was reported
in Ref. \cite{cejs}. In four-qubit case, all pure states are
classified into nine SLOCC inequivalent families using group theory
\cite{fjbh}. For $n$-qubit system, Ref. \cite{lili} uses the ranks of
the coefficient matrices to study SLOCC classification for pure
state. Then Ref. \cite{syjj} generalizes Li's approach to n-qudit
pure state. Recently, Ref. \cite{ggnr} shows that almost all SLOCC
equivalent classes can be distinguished by ratios of homogeneous
SL-invariant polynomials of the same degree. Theoretically, their
technique can be applied to any number of qudits in all dimensions.
But, it is still a significant challenge to find a general scheme
that is able to completely identify the different entanglement
classes and determine the transformation matrices connecting two
equivalent states under SLOCC for multipartite mixed states. In Ref.
\cite{nmxt}, we have constructed a nontrivial set of invariants for
any multipartite mixed states under SLOCC.

In this paper we present a general scheme for the SLOCC equivalence
of arbitrary dimensional multipartite quantum pure or mixed states in
terms of matrix realignment \cite{racr,kcla}. In Sec. II, we recall
some basic results, then we give the criterion for how to judge a
block invertible matrix can be decomposed as the tensor products of invertible matrices. In Sec. III, we
give a necessary and sufficient criterion for
the SLOCC equivalence of multipartite pure states. For the
multipartite mixed states, we propose a similar criterion based on
the density matrix itself in Sec. IV. These criteria are shown to be
still operational for general states, and we also give the explicit
forms of the connecting matrix for two SLOCC equivalent states in
specific examples. At last, we give the conclusions and remarks.

\section{Tensor products decomposition for block invertible matrix}

First we introduce the definitions for realignment of matrix
\cite{racr,kcla}.

\noindent {\bf Definition 1:} For any $M\times N$ matrix $A$ with
entries $a_{ij}$, $vec(A)$ is defined by
$$
vec(A)\equiv[a_{11},\cdots,a_{M1},a_{12}\cdots,a_{M2},\cdots,a_{1N},\cdots,a_{MN}]^T,
$$
where $T$ denotes transposition.

\noindent {\bf Definition 2:} Let $Z$ be an $M\times M$ block matrix
with each block of size $N\times N$, the realigned matrix $R(Z)$ is
defined by
$
R(Z)\equiv[vec(Z_{11}),\cdots,vec(Z_{M1}),\cdots,vec(Z_{1M}),\cdots,vec(Z_{MM})]^T.
$

Based on the definitions of realignment, Ref. \cite{lljc} shows a necessary and sufficient condition for the tensor products decomposition of invertible matrices for a matrix.

\begin{lemma}\label{lemma tensor decomposition}
An $MN\times MN$ invertible matrix $A$ is
expressed as the tensor product of an $M\times M$ invertible matrix
$A_1$ and an $N\times N$ invertible matrix $A_2$, i.e, $A=A_1\otimes
A_2$ if and only if rank $R(A)=1$.
\end{lemma}

For any $N_1N_2\cdots N_K\times N_1N_2\cdots N_K$
matrix $A$, we denote $A_{i|\widehat{i}}$ the $N_i\times N_i$ block
matrix with each block of size $N_1N_2\cdots N_{i-1} N_{i+1}\cdots
N_K\times N_1N_2\cdots N_{i-1} N_{i+1}\cdots N_K$. Namely, we view
$A$ as a bipartite partitioned matrix $A_{i|\widehat{i}}$ with
partitions $H_i$ and $H_1\otimes H_2 ... H_{i-1}\otimes
H_{i+1}...H_K$. Accordingly, we have the realigned matrix
$R(A_{i|\widehat{i}})$.

\begin{theorem}\label{tensor decomposition}
Let $A$ be an $N_1N_2\cdots N_K\times
N_1N_2\cdots N_K$ invertible matrix, there exist $N_i\times N_i$
invertible matrices $a_i$, $i=1,2,\cdots, K$, such that
$A=a_1\otimes a_2 \otimes \cdots\otimes a_K$ if and only if the
rank$(R(A_{i|\widehat{i}}))=1$ for all $i$.
\end{theorem}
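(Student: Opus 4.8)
The plan is to reduce the multipartite statement to the bipartite Lemma~\ref{lemma tensor decomposition} by induction on the number of parties $K$. The case $K=2$ is precisely Lemma~\ref{lemma tensor decomposition}, so I assume the theorem holds for $K-1$ factors and prove it for $K$.

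For the necessity direction, suppose $A=a_1\otimes a_2\otimes\cdots\otimes a_K$ with each $a_i$ invertible. Fix $i$ and view $A$ in the bipartite cut that groups system $i$ against the remaining systems $\widehat{i}$. A basic property of tensor products is that, across this cut, $A$ factorizes as $a_i\otimes(\bigotimes_{j\neq i}a_j)$, a genuine bipartite product of the $N_i\times N_i$ matrix $a_i$ with the invertible $\prod_{j\neq i}N_j\times\prod_{j\neq i}N_j$ matrix $\bigotimes_{j\neq i}a_j$. Lemma~\ref{lemma tensor decomposition} then yields rank$(R(A_{i|\widehat{i}}))=1$ for every $i$.

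For the sufficiency direction, assume rank$(R(A_{i|\widehat{i}}))=1$ for all $i$. Applying Lemma~\ref{lemma tensor decomposition} to the cut $1|\widehat{1}$ produces a factorization $A=a_1\otimes B$ with $a_1$ an $N_1\times N_1$ matrix and $B$ an $N_2\cdots N_K\times N_2\cdots N_K$ matrix; since $A$ is invertible and $\det(a_1\otimes B)=(\det a_1)^{N_2\cdots N_K}(\det B)^{N_1}$, both $a_1$ and $B$ are automatically invertible. It remains to show that $B$ inherits the hypothesis, namely that rank$(R(B_{j|\widehat{j}}))=1$ for every $j\in\{2,\dots,K\}$ (where here $\widehat{j}$ denotes the complement of $j$ among the systems $2,\dots,K$ carried by $B$), so that the induction hypothesis applies to $B$ and gives $B=a_2\otimes\cdots\otimes a_K$, hence $A=a_1\otimes\cdots\otimes a_K$.

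The crux of the argument, and the step I expect to be the main obstacle, is transferring the rank condition from $A$ to $B$. The clean way is to interpret rank$(R(M))$ in a bipartite cut as the operator Schmidt rank of $M$ across that cut, i.e.\ the least $r$ with $M=\sum_{k=1}^{r}X_k\otimes Y_k$; Lemma~\ref{lemma tensor decomposition} is exactly the rank-one instance of this identity for the realignment map. Writing an operator Schmidt decomposition $B=\sum_{k=1}^{r}S_k\otimes T_k$ across $B$'s own $j$-versus-rest cut and tensoring with $a_1$ gives $A=\sum_{k=1}^{r}S_k\otimes(a_1\otimes T_k)$, a decomposition across the $j|\widehat{j}$ cut of $A$ (note that system $1$ now lies on the $\widehat{j}$ side). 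Because $a_1$ is invertible, the map $X\mapsto a_1\otimes X$ is injective, so $\{S_k\}$ and $\{a_1\otimes T_k\}$ stay linearly independent exactly when $\{S_k\}$ and $\{T_k\}$ do; hence the two operator Schmidt ranks coincide and $\mathrm{rank}(R(A_{j|\widehat{j}}))=\mathrm{rank}(R(B_{j|\widehat{j}}))$. The given condition rank$(R(A_{j|\widehat{j}}))=1$ then forces rank$(R(B_{j|\widehat{j}}))=1$, closing the induction. The only real care needed is bookkeeping of which factor sits on which side of each cut and confirming that invertibility, hence the nonvanishing of the factors used, is preserved throughout.
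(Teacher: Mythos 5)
Your proof is correct, and although it shares the paper's skeleton---induction on the number of parties $K$, with Lemma~\ref{lemma tensor decomposition} supplying the base case and the bipartite splittings---the crucial descent step is carried out by a genuinely different mechanism. The paper invokes Lemma~\ref{lemma tensor decomposition} on \emph{all} $K$ cuts simultaneously, writing $A=a_1\otimes a_{\widehat{1}}=\cdots=a_K\otimes a_{\widehat{K}}$, then left-multiplies by $I_1\otimes\cdots\otimes I_{K-1}\otimes a_K^{-1}$ and traces out the last subsystem, obtaining $N_K\,a_{\widehat{K}}=a_i\otimes \mathrm{Tr}_K(\cdots)$ for every $i<K$; this exhibits $a_{\widehat{K}}$ as factorizing across each internal cut, and the induction hypothesis is applied to $a_{\widehat{K}}$. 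You instead use only the single factorization $A=a_1\otimes B$ coming from the cut $1|\widehat{1}$, and transfer the remaining hypotheses to $B$ by identifying $\mathrm{rank}(R(\cdot))$ across a cut with the operator Schmidt rank and noting that tensoring with the fixed nonzero operator $a_1$ preserves linear independence, hence $\mathrm{rank}(R(A_{j|\widehat{j}}))=\mathrm{rank}(R(B_{j|\widehat{j}}))$ exactly. Your route is cleaner: it needs no partial traces, no normalization constants, and no separate verification that traced-out factors are invertible (a point the paper glosses over, e.g.\ the invertibility of $\mathrm{Tr}_1((a_1^{-1}\otimes I_3)a_{13})/N_1$ is asserted rather than argued, though it does follow since $a_{23}$ is invertible); moreover your rank-transfer identity isolates the one linear-algebra fact doing the work. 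What the paper's computation buys is that it never needs the general rank-$r$ (Schmidt-rank) interpretation of the realignment, only the rank-one statement of Lemma~\ref{lemma tensor decomposition}. Both arguments share the same benign looseness about tensor-factor ordering, since the cut $j|\widehat{j}$ involves a permutation placing system $j$ against a block containing system $1$; you at least flag this bookkeeping explicitly, which the paper does not.
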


\begin{proof}

First, if there exist $N_i\times N_i$
 invertible matrices
$a_i$, $i=1,2,\cdots, K$, such that $A=a_1\otimes a_2 \otimes \cdots
\otimes a_K$, by viewing $A$ in bipartite partition and using Lemma
\ref{lemma tensor decomposition}, one has directly that rank$(R({A_{i|\widehat{i}}}))=1$ for all
$i$.

On the other hand, if rank$(R({A_{i|\widehat{i}}}))=1$, for any
given $i$, we prove the conclusion by induction. First, for $n=3$,
from Lemma \ref{lemma tensor decomposition}, we have $A=a_1\otimes a_{23}=a_2\otimes a_{13}$. Multiplying $a_1^{-1}$ for the first subsystem from the left, it has
$(a_1^{-1}\otimes I_{2}\otimes I_{3})A=I_1\otimes a_{23}=a_2\otimes
((a_1^{-1}\otimes I_3)a_{13})$. By tracing out the first subsystem,
we get $N_1a_{23}=a_2\otimes Tr_1((a_1^{-1}\otimes I_3)a_{13})$,
i.e, $a_{23}=a_2\otimes a_3^{\prime}$ with invertible matrix
$a_3^{\prime}=Tr_1((a_1^{-1}\otimes I_3)a_{13})/N_1$. Assume that
the conclusion is also true for $K-1$, then for $K$, from Lemma \ref{lemma tensor decomposition},
we have $A=a_1\otimes a_{\widehat{1}}=a_2\otimes
a_{\widehat{2}}=\cdots=a_K\otimes a_{\widehat{K}}$, where $a_i$ is
an $N_i\times N_i$ invertible matrix and $a_{\widehat{i}}$ is an
$N_1 N_2\cdots N_{i-1} N_{i+1}\cdots N_K\times N_1 N_2\cdots N_{i-1}
N_{i+1}\cdots N_K$ invertible matrix, $i=1,2,\cdots,K$. Hence
$(I_1\otimes\cdots \otimes I_{K-1}\otimes
a_K^{-1})A=(I_1\otimes\cdots \otimes I_{K-1}\otimes
a_K^{-1})(a_1\otimes a_{\widehat{1}})=\cdots=(I_1\otimes\cdots \otimes I_{K-1}\otimes
a_K^{-1})(a_{\widehat{K}}\otimes
a_K)$. Tracing out the last subsystem we get $a_1\otimes
Tr_K(I_2\otimes\cdots \otimes I_{N_{K-1}}\otimes
a_K^{-1})a_{\widehat{1}}))=\cdots=Tr_K((I_1\otimes\cdots \otimes
I_{K-2}\otimes a_K^{-1})\otimes(a_{K-1})=N_K a_{\widehat{K}}$. Based
on the assumption, we know $a_{\widehat{K}}$ can be written as the
tensor products of local invertible operators. Therefore, $A$ also can be
written as the tensor products of local invertible operators, which completes
the proof.

\end{proof}

\section{criterion for multipartite pure states}

First, we recall the notations of coefficient matrices of pure
state \cite{lili,syjj}. Let $\{|i_1\rangle\}_{i_1=0}^{n_1-1}$, $\{|i_2\rangle\}_{i_2=0}^{n_2-1}$, $\cdots$, $\{|i_K\rangle\}_{i_K=0}^{n_K-1}$ be orthnormal basis of $K$ Hilbert spaces ${\cal H}_1$, ${\cal H}_2$, $\cdots$, ${\cal H}_K$. For any $K$ partite pure state
$|\psi\rangle=\Sigma_{i_1,i_2,\cdots,i_K=0}^{n_1-1,n_2-1,\cdots,n_K-1}a_{i_1i_2,\cdots,i_K}|i_1i_2,\cdots,i_K\rangle$,
$\Sigma_{i_1,i_2,\cdots,i_K=0}^{n_1-1,n_2-1,\cdots,n_K-1} |a_{i_1i_2,\cdots,i_K}|^2=1$, we associate an $m\times n$ coefficient
 matrix $M(|\psi\rangle)$ to it, $m=n_1n_2\cdots n_t$, $n=n_{t+1}\cdots n_K$, $t=[\frac{K}{2}]$.

For example, for three qubit pure state
$|\psi\rangle=\sum_{i_1,i_2,i_3=0}^1
a_{i_1i_2i_3}|i_1i_2i_3\rangle$, we have the $2\times 4$ coefficient
matrices:
\begin{eqnarray*}
M(|\psi\rangle)=\left(
\begin{array}{cccc}
a_{000} & a_{001} & a_{010} & a_{011}\\
a_{100} & a_{101} & a_{110} & a_{111}\\
\end{array}
\right).
\end{eqnarray*}
For four qubit pure state $|\psi\rangle=\sum_{s_1,s_2,s_3,s_4=0}^1
a_{s_1s_2s_3s_4}|s_1s_2s_3s_4\rangle$, there is $4\times 4$
coefficient matrices, that is:\begin{eqnarray*}
M(|\psi\rangle)&=&\left(
\begin{array}{cccc}
a_{0000} & a_{0001} & a_{0010} & a_{0011}\\
a_{0100} & a_{0101} & a_{0110} & a_{0111}\\
a_{1000} & a_{1001} & a_{1010} & a_{1011}\\
a_{1100} & a_{1101} & a_{1110} & a_{1111}
\end{array}
\right).
\end{eqnarray*}

Using the rank of coefficient matrix $M(|\psi\rangle)$, Refs. \cite{lili,syjj} classified multipartite pure states into different families. If the coefficient matrices of two pure states have different ranks, then these two pure states are not SLOCC equivalent. While
the converse does not hold true, i.e. if the coefficient matrices have the same rank, then corresponding pure states are
not necessarily SLOCC equivalent. Here we answer this question further
when two states with the same rank of the coefficient matrices are equivalent
under SLOCC.

\begin{theorem}\label{th for pure}
For two $K$-partite pure states $|\phi\rangle$
and $|\psi\rangle$, they
are SLOCC equivalent if and only if for one pair of coefficient matrices $M(|\phi\rangle)$ and $M(|\psi\rangle)$, there
are $m\times m $ unitary matrices $X_1, X_2$, invertible diagonal
matrix $B_1$,and $n\times n $ unitary matrices $Y_1, Y_2$, invertible
diagonal matrix $B_2$, such that
\begin{eqnarray}\label{slocc2}
M(|\phi\rangle)=X_1B_1X_2^{\dag}M(|\psi\rangle)Y_2^{\dag}B_2Y_1,
\end{eqnarray}
and
\begin{eqnarray}\label{eq r1 in th1}
rank[R((X_1B_1X_2^{\dag})_{i|\widehat{i}})]=1
\end{eqnarray}
and
\begin{eqnarray}\label{eq r2 in th1}
rank
[R((Y_2^{\dag}B_2Y_1)_{j|\widehat{j}})]=1,
\end{eqnarray}
$i=1,2,\cdots,t$,
$j=t+1,\cdots,K$.
\end{theorem}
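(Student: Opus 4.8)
The plan is to reduce both implications to a single transformation law for the coefficient matrix under local operators, combined with Theorem \ref{tensor decomposition}. The crucial preliminary observation is that if $|\phi\rangle=A_1\otimes\cdots\otimes A_K|\psi\rangle$, then grouping the first $t=[\frac{K}{2}]$ tensor factors into the rows and the remaining $K-t$ into the columns gives
\[
M(|\phi\rangle)=(A_1\otimes\cdots\otimes A_t)\,M(|\psi\rangle)\,(A_{t+1}\otimes\cdots\otimes A_K)^{T},
\]
where $P:=A_1\otimes\cdots\otimes A_t$ is $m\times m$ and $Q:=A_{t+1}\otimes\cdots\otimes A_K$ is $n\times n$. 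I would establish this by a direct index computation that tracks the action of each $A_i$ on $|i_i\rangle$; the only point demanding care is that the second group of factors acts on $M(|\psi\rangle)$ through its transpose, not its conjugate.

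For the forward direction I assume SLOCC equivalence, write the displayed relation, and take singular value decompositions $P=X_1B_1X_2^{\dag}$ and $Q^{T}=Y_2^{\dag}B_2Y_1$. Since $P$ and $Q$ are invertible their singular values are strictly positive, so $B_1,B_2$ are invertible diagonal and $X_1,X_2,Y_1,Y_2$ are unitary; this yields (\ref{slocc2}). To obtain (\ref{eq r1 in th1}) and (\ref{eq r2 in th1}) I note that the realigned-rank conditions depend only on the matrices $X_1B_1X_2^{\dag}=P$ and $Y_2^{\dag}B_2Y_1=Q^{T}$, not on the particular decomposition used. Because $P=A_1\otimes\cdots\otimes A_t$ and $Q^{T}=A_{t+1}^{T}\otimes\cdots\otimes A_K^{T}$ are genuine tensor products of invertible matrices, Theorem \ref{tensor decomposition}, applied with $t$ and with $K-t$ factors respectively, gives exactly $\mathrm{rank}[R(P_{i|\widehat{i}})]=1$ for $i\le t$ and $\mathrm{rank}[R((Q^{T})_{j|\widehat{j}})]=1$ for $j>t$, which are (\ref{eq r1 in th1}) and (\ref{eq r2 in th1}).

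For the converse I start from (\ref{slocc2})--(\ref{eq r2 in th1}). Reading (\ref{eq r1 in th1}) and (\ref{eq r2 in th1}) through Theorem \ref{tensor decomposition} in the reverse direction factors $X_1B_1X_2^{\dag}=a_1\otimes\cdots\otimes a_t$ and $Y_2^{\dag}B_2Y_1=b_{t+1}\otimes\cdots\otimes b_K$ with all $a_i,b_j$ invertible. Substituting into (\ref{slocc2}) gives $M(|\phi\rangle)=(a_1\otimes\cdots\otimes a_t)M(|\psi\rangle)(b_{t+1}\otimes\cdots\otimes b_K)$. Setting $A_i=a_i$ for $i\le t$ and $A_j=b_j^{T}$ for $j>t$, all invertible, this is precisely the transformation law for $|\phi\rangle=A_1\otimes\cdots\otimes A_K|\psi\rangle$, so the two states are SLOCC equivalent.

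The routine parts are the index bookkeeping in the transformation law and the invocation of the SVD. The main obstacle is to phrase the equivalence so that the realigned-rank conditions (\ref{eq r1 in th1})--(\ref{eq r2 in th1}), which are conditions on the \emph{fixed} matrices $X_1B_1X_2^{\dag}$ and $Y_2^{\dag}B_2Y_1$, align exactly with the hypothesis ``is a tensor product of invertible local operators'' that Theorem \ref{tensor decomposition} characterizes. In particular, one must handle the grouping into $t$ and $K-t$ subsystems and the transpose on the second group consistently in both directions, so that the local operators recovered in the converse (especially the transposed factors $b_j^{T}$) are genuinely invertible and reproduce the correct side on which $Q^{T}$ acts.
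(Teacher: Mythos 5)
Your proposal is correct and follows essentially the same route as the paper: the coefficient-matrix transformation law $M(|\phi\rangle)=(C_1\otimes\cdots\otimes C_t)M(|\psi\rangle)(C_{t+1}\otimes\cdots\otimes C_K)^{T}$, singular value decomposition of the two grouped local operators, and Theorem \ref{tensor decomposition} applied in both directions. Your write-up is in fact slightly tidier than the paper's, since it keeps the SVD factors in the order $Y_2^{\dag}B_2Y_1$ demanded by Eq. (\ref{slocc2}) and makes explicit the transpose bookkeeping ($A_j=b_j^{T}$) needed to recover genuine local operators in the converse.
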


\begin{proof}

First, suppose $|\phi\rangle$ and
$|\psi\rangle$ are SLOCC equivalent, i.e. there exist invertible
matrices $C_1,C_2,\cdots,C_K$ such that $|\phi\rangle=(C_1\otimes
C_2\otimes\cdots\otimes C_K)|\psi\rangle$. In matrix form,
\begin{eqnarray}\label{slocc3}
M(|\phi\rangle)=(C_1\otimes
C_2\otimes\cdots\otimes
C_t)M(|\psi\rangle)(C_{t+1}\otimes \cdots\otimes
C_K)^T.
\end{eqnarray}
For invertible matrices $C_1\otimes
C_2\otimes\cdots\otimes
C_t$ and $(C_{t+1}\otimes \cdots\otimes
C_K)^T$, by the singular value
decomposition of a matrix, there exist $m\times m$ unitary
matrices $X_1$, $X_2$, invertible diagonal
matrix $B_1$, and $n\times n $ unitary matrices $Y_1, Y_2$, invertible
diagonal matrix $B_2$ such that:
\begin{eqnarray*}
C_1\otimes
C_2\otimes\cdots\otimes
C_t&=&X_1 B_1 X_2^{\dag},\\
(C_{t+1}\otimes \cdots\otimes
C_K)^T&=&Y_1 B_2 Y_2^{\dag}.
\end{eqnarray*}
Inserting these decompositions into Eq. (\ref{slocc3}), one gets easily
Eq. (\ref{slocc2}). By Lemma \ref{tensor decomposition}, we can get Eqs. (\ref{eq r1 in th1}) and (\ref{eq r2 in th1}),
$i=1,2,\cdots,t$,
$j=t+1,\cdots,K$.

On the other hand, suppose there exist one pair of coefficient
matrices $M(|\phi\rangle)$ and $M(|\psi\rangle)$ of $|\phi\rangle$ and
$|\psi\rangle$ satisfying the conditions mentioned in the Theorem. By Lemma \ref{tensor decomposition}, we know there are invertible matrices $C_1,C_2,\cdots,C_K$
such that Eq. (\ref{slocc3}) holds true. Therefore $|\phi\rangle=(C_1\otimes C_2\otimes\cdots\otimes
C_k)|\psi\rangle$, i.e. $|\phi\rangle$ and $|\psi\rangle$ are
SLOCC equivalent.

\end{proof}

Let us now take a closer look at equations in Theorem \ref{th for pure}. Eq. (\ref{slocc3}) means if two pure states are SLOCC equivalent, then their coefficient matrices have the same rank. Eqs. (\ref{eq r1 in th1}) and (\ref{eq r2 in th1}) means if two pure states are SLOCC equivalent, then their coefficient matrices
are connected by tensor products of invertible matrices. So if the coefficient matrices have the same rank, then one needs to verify Eqs. (\ref{eq r1 in th1}) and (\ref{eq r2 in th1}) to check whether two pure states are SLOCC equivalent or not.

Operationally, for two pure states $|\phi\rangle$ and $|\psi\rangle$, we first choose one kind of coefficient matrices $M(|\phi\rangle)$ and $M(|\psi\rangle)$. If $M(|\phi\rangle)$ and $M(|\psi\rangle)$ have different ranks, then $|\phi\rangle$ and $|\psi\rangle$ are not SLOCC equivalent. If $M(|\phi\rangle)$ and $M(|\psi\rangle)$ have the same rank,
then by the singular value
decomposition, there are $m\times m$ unitary
matrices $X_1$, $X_2$, diagonal
matrix $\Lambda_1$, and $n\times n $ unitary matrices $Y_1, Y_2$,
diagonal matrix $\Lambda_2$ such that:
\begin{equation}
M(|\phi\rangle)=X_1\Lambda_1 Y_1
\end{equation}
and
\begin{equation}
M(|\psi\rangle)=X_2\Lambda_2 Y_2,
\end{equation}
where
$\Lambda_1=diag(\lambda_1,\lambda_2,\cdots,\lambda_r,0,\cdots,0)$;
$\Lambda_2=diag(\mu_1,\mu_2,\cdots,\mu_r,0,\cdots,0)$,
$\lambda_i$ and $\mu_i$ are nonzero real numbers. Let
$m\times m$ invertible matrix
$B_1=diag(\sqrt{\frac{\lambda_1}{\mu_1}},\sqrt{\frac{\lambda_2}{\mu_2}},\cdots,\sqrt{\frac{\lambda_r}{\mu_r}},1,\cdots,1)$
and $n\times n$ invertible matrix
$B_2=diag(\sqrt{\frac{\lambda_1}{\mu_1}},\sqrt{\frac{\lambda_2}{\mu_2}},\cdots,\sqrt{\frac{\lambda_r}{\mu_r}},1,\cdots,1)$, then one has
$\Lambda_1=B_1\Lambda_2B_2$ and
$M(|\phi\rangle)=X_1B_1X_2^{\dag}M(|\psi\rangle)Y_2^{\dag}B_2Y_1$. Next one needs to calculate the ranks for the realignment of $X_1B_1X_2^{\dag}$ and $Y_2^{\dag}B_2Y_1$ under all partitions to see whether it is one or not.

For bipartite pure state $|\phi\rangle=\Sigma_{i_1,i_2=0}^{n_1-1,n_2-1}a_{i_1i_2}|i_1i_2\rangle$,
there is only one way to express its coefficients in matrix form, $M(|\phi\rangle)=(a_{i_1i_2})$. Therefore, two bipartite pure states $|\phi\rangle$ and $|\psi\rangle$ are SLOCC
equivalence if and only if there exist invertible matrices $C_1,C_2$ such that
\begin{eqnarray*}
M(|\phi\rangle)=C_1M(|\psi\rangle)C_{2}^T.
\end{eqnarray*}
Or equivalently, two bipartite pure states $|\phi\rangle$ and $|\psi\rangle$ are SLOCC
equivalence if and only if their coefficient matrices have the same rank.

\section{criterion for multipartite mixed states}

\begin{theorem}\label{th for mixed}

For two
multipartite mixed quantum states $\rho_1$ and $\rho_2$, they are SLOCC equivalent if and only if
there exist $N_1N_2\cdots N_K\times N_1N_2\cdots N_K$
unitary matrices $X$ and $Y$, real diagonal invertible matrix $B$, such that
\begin{eqnarray}\label{slocc6}
\rho_1=XBY^{\dag}\rho_2YBX^{\dag},
\end{eqnarray}
and
\begin{eqnarray}\label{eq r in th2}
rank(R({XBY^{\dag}})_{i|\widehat{i}})=1,
\end{eqnarray}
for
$i= 1,2,\cdots, K$.

\end{theorem}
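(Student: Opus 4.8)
The plan is to reduce the statement to Theorem~\ref{tensor decomposition} via the singular value decomposition, mirroring the proof of Theorem~\ref{th for pure}. The key observation is that, for mixed states, SLOCC equivalence means precisely $\rho_1 = A\rho_2 A^{\dagger}$ where $A = A_1\otimes A_2\otimes\cdots\otimes A_K$ is a single invertible matrix that happens to factor as a tensor product of invertible local operators. So the entire content of the theorem is to rewrite this single matrix $A$ through its SVD and to encode the tensor-product structure in terms of realignment ranks.

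For the forward direction, I would start from $\rho_1 = A\rho_2 A^{\dagger}$ with $A = A_1\otimes\cdots\otimes A_K$ invertible. Applying the SVD to $A$, write $A = XBY^{\dagger}$ with $X, Y$ unitary and $B$ the diagonal matrix of singular values; since $A$ is invertible these singular values are all positive, so $B$ is a real diagonal invertible matrix. Because $B$ is real diagonal one has $B^{\dagger}=B$, hence $A^{\dagger} = YBX^{\dagger}$, and substituting gives exactly Eq.~(\ref{slocc6}). Finally, since $A = XBY^{\dagger}$ is a tensor product of invertible matrices, Theorem~\ref{tensor decomposition} yields $\mathrm{rank}(R((XBY^{\dagger})_{i|\widehat{i}}))=1$ for all $i$, which is Eq.~(\ref{eq r in th2}).

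For the converse, I would set $A := XBY^{\dagger}$ and use the rank conditions~(\ref{eq r in th2}) together with Theorem~\ref{tensor decomposition} to conclude that $A$ factors as $A = A_1\otimes A_2\otimes\cdots\otimes A_K$ with each $A_i$ invertible. Once $A$ is a tensor product, so is $A^{\dagger} = A_1^{\dagger}\otimes\cdots\otimes A_K^{\dagger}$; and since $B$ is real diagonal, $A^{\dagger} = YBX^{\dagger}$. Then Eq.~(\ref{slocc6}) reads $\rho_1 = A\rho_2 A^{\dagger} = (A_1\otimes\cdots\otimes A_K)\rho_2(A_1\otimes\cdots\otimes A_K)^{\dagger}$, which is precisely the definition of SLOCC equivalence.

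The argument is essentially mechanical once Theorem~\ref{tensor decomposition} is available, so there is no serious obstacle; the only point requiring care is the bookkeeping of the adjoints. Note that the rank condition is imposed only on the single matrix $XBY^{\dagger}$ rather than on both it and its adjoint, and the reason this suffices is that the tensor-product property of $A$ is inherited by $A^{\dagger}$ automatically. Verifying that the real-diagonal choice of $B$ forces $B^{\dagger}=B$---so that the two occurrences of $B$ in Eq.~(\ref{slocc6}) are genuinely the same matrix and align correctly with $A$ and $A^{\dagger}$---is the one place where the specific form of the SVD must be respected.
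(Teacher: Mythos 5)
Your proof is correct and follows essentially the same route as the paper's: apply the singular value decomposition to the tensor-product ILO connecting the states to get Eq.~(\ref{slocc6}), and encode the tensor-product structure via the realignment rank conditions. If anything, your write-up is slightly tighter than the paper's own: you orient the SLOCC relation as $\rho_1 = A\rho_2 A^{\dagger}$ so that it matches Eq.~(\ref{slocc6}) directly, and in the converse you invoke Theorem~\ref{tensor decomposition} (the multipartite factorization), which is the result actually needed there, whereas the paper cites only the bipartite Lemma~\ref{lemma tensor decomposition}.
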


\begin{proof}

If $\rho_1$ and $\rho_2$ are SLOCC equivalent, then there exist invertible matrices $a_1, a_2, \cdots, a_K$ such that $(a_1\otimes a_2\otimes\cdots \otimes
a_K)\rho_1(a_1\otimes a_2\otimes\cdots\otimes a_K)^{\dag}=\rho_2$. For matrix $a_1\otimes a_2\otimes\cdots \otimes
a_K$, by singular value decomposition,
there exist $N_1N_2\cdots N_K\times N_1N_2\cdots N_K$
unitary matrices $X$, $Y$, real diagonal invertible matrix $B$,
such that $a_1\otimes a_2\otimes\cdots\otimes a_K=XBY^{\dag}$.
Then $R({XBY^{\dag}}) =R(a_1\otimes a_2\cdots \otimes
a_n)$. From Lemma \ref{lemma tensor decomposition},
$rank(R({XBY^{\dag}})_{i|\widehat{i}})=1$, for
$i= 1,2,\cdots, K$.

On the other hand, if there exist $N_1N_2\cdots N_K\times N_1N_2\cdots N_K$
unitary matrices $X$ and $Y$, real diagonal invertible matrix $B$, such that Eq. (\ref{slocc6}) holds true and $rank(R({XBY^{\dag}})_{i|\widehat{i}})=1$ for $i= 1,2,\cdots, K$, then by Lemma \ref{lemma tensor decomposition}, there exist invertible matrices $a_1, a_2, \cdots, a_K$ such that
$XBY^{\dag}=a_1\otimes a_2\otimes \cdots \otimes a_n$. Inserting this equation into Eq. (\ref{slocc6}), one gets $(a_1\otimes a_2
\otimes\cdots\otimes a_n)^{\dag}\rho_1(a_1\otimes a_2 \otimes
\cdots\otimes a_n)=\rho_2$, which ends the proof.

\end{proof}

Eq. (\ref{slocc6}) means if two mixed states are SLOCC equivalent,
then they have the same rank. Eq. (\ref{eq r in th2}) means if two
mixed states are SLOCC equivalent, then they are connected by the
tensor products of invertible matrices. Now we show how to verify
Theorem \ref{th for mixed} explicitly. For two mixed states $\rho_1$
and $\rho_2$, if they have different ranks, then they are not SLOCC
equivalent. Or else, if $\rho_1$ and $\rho_2$ have the same rank,
then we first study their spectra decompositions, \be\label{eq1}
\rho_1=X\Lambda_1 X^\dag, \ \ \rho_2=Y\Lambda_2 Y^\dag, \ee where
$X=[x_1,x_2,\cdots,x_{N_1N_2\cdots N_K}],
Y=[y_1,y_2,\cdots,y_{N_1N_2\cdots N_K}]$, $\{x_i\}$ and $\{y_i\}$
are the normalized eigenvectors of states $\rho_1$ and $\rho_2$.
$\Lambda_1=diag(\lambda_1,\lambda_2,\cdots,\lambda_r,0,\cdots,0)$;
$\Lambda_2=diag(\mu_1,\mu_2,\cdots,\mu_r,0,\cdots,0)$, $\lambda_i$
and $\mu_i$ are nonzero real numbers. For diagonal matrices
$\Lambda_1$ and $\Lambda_2$, there exists $N_1N_2\cdots N_K\times
N_1N_2\cdots N_K$ invertible matrix
\begin{eqnarray}\label{slocc1}B=diag(\sqrt{\frac{\lambda_1}{\mu_1}},
\sqrt{\frac{\lambda_2}{\mu_2}},\cdots,\sqrt{\frac{\lambda_r}{\mu_r}},s,\cdots,t)
\end{eqnarray}
such that
\begin{eqnarray*}
\Lambda_1=B\Lambda_2B,
\end{eqnarray*}
where $s,\cdots,t$ are arbitrary nonzero numbers. Therefore, there exist $N_1N_2\cdots N_K\times N_1N_2\cdots N_K$
unitary matrices $X$ and $Y$, real diagonal invertible matrix $B$, such that
Eq. (\ref{slocc6}) holds true.
Next we need to verify the rank of realignment of $XBY^{\dag}$ to see whether $\rho_1$ and $\rho_2$ are SLOCC equivalent or not.

Example 1. First, we consider two-qubit Bell-diagonal states in two-qubit system
\cite{rmhh,orud}:
$$\rho_1=\sum_{i=1}^4\lambda_i|\psi_i\rangle\langle\psi_i|,\ \ \ \lambda_i\geq 0, \ \ \ \sum_{i=1}^4\lambda_i=1, \ \ \ i=1,2,3,4;$$
$$\rho_2=\sum_{i=1}^4\mu_i|\psi_i\rangle\langle\psi_i|,\ \ \ \mu_i\geq 0, \ \ \ \sum_{i=1}^4\mu_i=1, \ \ \ i=1,2,3,4;$$
with $|\psi_1\rangle=\frac{1}{\sqrt{2}}(|00\rangle+|11\rangle)$,
$|\psi_2\rangle=\frac{1}{\sqrt{2}}(|00\rangle-|11\rangle)$,
$|\psi_3\rangle=\frac{1}{\sqrt{2}}(|01\rangle+|10\rangle)$,
$|\psi_4\rangle=\frac{1}{\sqrt{2}}(|01\rangle+|10\rangle)$.
By spectra decomposition, we have $X=Y=\left(\begin{array}{cccc}
1 & -1 & 0 & 0 \\
0 & 0 & -1 & 1 \\
0 & 0 & 1 & 1 \\
1 & 1 & 0 & 0
\end{array}\right);$
$\Lambda_1=diag(2\lambda_1,2\lambda_2,2\lambda_4,2(1-\lambda_1-\lambda_2-\lambda_4))$;
$\Lambda_2=diag(2\mu_1,2\mu_2,2\mu_4,2(1-\mu_1-\mu_2-\mu_4))$. For
simplicity, we consider only the non-degenerate case, which means $\Lambda_1$ and $\Lambda_2$ are nonsingular. Let
$B=diag(\sqrt{\frac{\lambda_1}{\mu_1}},\sqrt{\frac{\lambda_2}{\mu_2}},\sqrt{\frac{\lambda_4}
{\mu_4}},\sqrt{\frac{\lambda_1+\lambda_2+\lambda_4-1}{\mu_1+\mu_2+\mu_4-1}})$, then $\rho_1$ and $\rho_2$ satisfy Eq. (\ref{slocc6}).
Next we need to study the rank of realignment matrix
$XBY^{\dag}=B$. We find if
$$\sqrt{\frac{\lambda_1}{\mu_1}}:\sqrt{\frac{\lambda_4}
{\mu_4}}=\sqrt{\frac{\lambda_2}{\mu_2}}:\sqrt{\frac{\lambda_1+\lambda_2+\lambda_4-1}{\mu_1+\mu_2+\mu_4-1}},$$
then $rank(R(XBY^\dagger))=1$. In this case, $\rho_1$ and $\rho_2$ are SLOCC equivalent.

 Example 2. Now we consider two mixed states in $2\otimes 2 \otimes 2$
system,
$$
\rho_1=\frac{1}{K}\left(\begin{array}{cccccccc}
1 & 0 & 0 & 0 & 0 & 0 & 0 & 1\\
0 & a & 0 & 0 & 0 & 0 & 0 & 0\\
0 & 0 & b & 0 & 0 & 0 & 0 & 0\\
0 & 0 & 0 & c & 0 & 0 & 0 & 0\\
0 & 0 & 0 & 0 & \frac{1}{c} & 0 & 0 & 0\\
0 & 0 & 0 & 0 & 0 & \frac{1}{b} & 0 & 0\\
0 & 0 & 0 & 0 & 0 & 0 & \frac{1}{a} & 0\\
1 & 0 & 0 & 0 & 0 & 0 & 0 & 1
\end{array}\right),
$$
$$
\rho_2=\frac{1}{M}\left(\begin{array}{cccccccc}
1 & 0 & 0 & 0 & 0 & 0 & 0 & 1\\
0 & \alpha & 0 & 0 & 0 & 0 & 0 & 0\\
0 & 0 & \beta & 0 & 0 & 0 & 0 & 0\\
0 & 0 & 0 & \gamma & 0 & 0 & 0 & 0\\
0 & 0 & 0 & 0 & \frac{1}{\gamma} & 0 & 0 & 0\\
0 & 0 & 0 & 0 & 0 & \frac{1}{\beta} & 0 & 0\\
0 & 0 & 0 & 0 & 0 & 0 & \frac{1}{\alpha} & 0\\
1 & 0 & 0 & 0 & 0 & 0 & 0 & 1
\end{array}\right),
$$
where the normalization factors
$K=2+a+b+c+\frac{1}{a}+\frac{1}{b}+\frac{1}{c}$.
$M=2+\alpha+\beta+\gamma+\frac{1}{\alpha}+\frac{1}{\beta}+\frac{1}{\gamma}$. First we study the spectra decompositions of $\rho_1$ and $\rho_2$.
Here as in Eq. (\ref{eq1}), $\Lambda_1=
\frac{1}{K} diag (\frac{1}{c},\frac{1}{b},\frac{1}{a},2,a,b,c,0)$ and
$\Lambda_2= \frac{1}{M} diag
(\frac{1}{\gamma},\frac{1}{\beta},\frac{1}{\alpha},2,\alpha,\beta,\gamma,0)$.
To simplify the problem, suppose $a$, $b$, $c$, $\alpha$, $\beta$, $\gamma$ take different values unequal to 0, 1, $\frac{1}{2}$, 2. Then we can
easily get
$$X=Y=\left(\begin{array}{cccccccc}
0 & 0 & 0 & \frac{1}{\sqrt{2}} & 0 & 0 & 0 & -\frac{1}{\sqrt{2}}\\
0 & 0 & 0 & 0 & 1 & 0 & 0 & 0\\
0 & 0 & 0 & 0 & 0 & 1 & 0 & 0\\
0 & 0 & 0 & 0 & 0 & 0 & 1 & 0\\
1 & 0 & 0 & 0 & 0 & 0 & 0 & 0\\
0 & 1 & 0 & 0 & 0 & 0 & 0 & 0\\
0 & 0 & 1 & 0 & 0 & 0 & 0 & 0\\
0 & 0 & 0 & \frac{1}{\sqrt{2}} & 0 & 0 & 0 & \frac{1}{\sqrt{2}}
\end{array}\right).$$
Let
$B=diag(\sqrt{\frac{M\gamma}{Kc}},\sqrt{\frac{M\beta}{Kb}},\sqrt{\frac{M\alpha}{Ka}},
\sqrt{\frac{M}{K}},\sqrt{\frac{Ma}{K\alpha}},\sqrt{\frac{Mb}{K\beta}},\sqrt{\frac{Mc}{K\gamma}},C)$
with $C$ an arbitrary nonzero number. Then $XBY^{\dag}=B$. Now we calculate the rank of the realignment of $XBY^{\dag}$.
If the coefficients of $\rho_1$ and $\rho_2$ satisfies the following two condition,\\
$(1)
\sqrt{\frac{\gamma}{c}}:\sqrt{\frac{a}{\alpha}}=\sqrt{\frac{\beta}{b}}:\sqrt{\frac{b}{\beta}}=\sqrt{\frac{\alpha}{a}}:\sqrt{\frac{c}{\gamma}}=1:C,$\\

$(2)
\sqrt{\frac{\gamma}{c}}:\sqrt{\frac{\beta}{b}}=\sqrt{\frac{\alpha}{a}}:1\\
$
then $rank(R({XBY^{\dag}})_{i|\widehat{i}})=1$ for $i= 1,2,3$. In this case, $\rho_1$ and $\rho_2$ are SLOCC equivalent.
For instance, when $\sqrt{\frac{\alpha}{a}}=\sqrt{2};
\sqrt{\frac{\beta}{b}}=2;\sqrt{\frac{\gamma}{c}}=2\sqrt{2}$, one chooses $C=\frac{1}{4}$. Then such two mixed states are SLOCC equivalent.

Example 3. Let us consider another pair of mixed states in $2\otimes 2
\otimes 2$ system,
$$
\rho_1=\frac{1}{K}\left(\begin{array}{cccccccc}
1 & 0 & 0 & 0 & 0 & 0 & 0 & \frac{1}{2}\\
0 & a & 0 & 0 & 0 & 0 & 0 & 0\\
0 & 0 & b & 0 & 0 & 0 & 0 & 0\\
0 & 0 & 0 & c & 0 & 0 & 0 & 0\\
0 & 0 & 0 & 0 & \frac{1}{c} & 0 & 0 & 0\\
0 & 0 & 0 & 0 & 0 & \frac{1}{b} & 0 & 0\\
0 & 0 & 0 & 0 & 0 & 0 & \frac{1}{a} & 0\\
\frac{1}{2} & 0 & 0 & 0 & 0 & 0 & 0 & 1
\end{array}\right),
$$$$\rho_2=\frac{1}{2K}\left(\begin{array}{cccccccc}
1+b & 0 & 1-b & 0 & 0 & -1/2 & 0 & 1/2\\
0 & a+c & 0 & a-c & 0 & 0 & 0 & 0\\
1-b & 0 & 1+b & 0 & 0 & -1/2 & 0 & 1/2\\
0 & a-c & 0 & a+c & 0 & 0 & 0 & 0\\
0 & 0 & 0 & 0 & \frac{1}{c}+\frac{1}{a} & 0 & -\frac{1}{a}+\frac{1}{c} & 0\\
-\frac{1}{2} & 0 & -\frac{1}{2} & 0 & 0 & \frac{1}{b}+1 & 0 & -1+\frac{1}{b}\\
0 & 0 & 0 & 0 & -\frac{1}{a}+\frac{1}{c} & 0 & \frac{1}{c}+\frac{1}{a} & 0\\
\frac{1}{2} & 0 & \frac{1}{2} & 0 & 0 & -1+\frac{1}{b} & 0 &
1+\frac{1}{b}
\end{array}\right),
$$
where the normalization factor
$K=\frac{3}{2}+a+b+c+\frac{1}{a}+\frac{1}{b}+\frac{1}{c}$. $\rho_1$
and $\rho_2$ have the same eigenvalues, $\Lambda_1=\Lambda_2=
\frac{1}{K} diag
(\frac{1}{c},\frac{1}{b},\frac{1}{a},\frac{3}{2},a,b,c,\frac{1}{2})$.
Now we consider the case with different $a$, $b$, and $c$ unequal to 0, 1, $\frac{2}{3}$,
$\frac{3}{2}$, 2, $\frac{1}{2}$, which implies that $\rho_1$ and $\rho_2$ are not
degenerated. In such case,
$$
X=\left(\begin{array}{cccccccc}
0 & 0 & 0 & \frac{1}{\sqrt{2}} & 0 & 0 & 0 & -\frac{1}{\sqrt{2}}\\
0 & 0 & 0 & 0 & 1 & 0 & 0 & 0\\
0 & 0 & 0 & 0 & 0 & 1 & 0 & 0\\
0 & 0 & 0 & 0 & 0 & 0 & 1 & 0\\
1 & 0 & 0 & 0 & 0 & 0 & 0 & 0\\
0 & 1 & 0 & 0 & 0 & 0 & 0 & 0\\
0 & 0 & 1 & 0 & 0 & 0 & 0 & 0\\
0 & 0 & 0 & \frac{1}{\sqrt{2}} & 0 & 0 & 0 & \frac{1}{\sqrt{2}}
\end{array}\right),$$
$$Y^{\dag}=\left(\begin{array}{cccccccc}
0 & 0 & 0 & 0 & \frac{1}{\sqrt{2}} & 0 & \frac{1}{\sqrt{2}} & 0\\
0 & 0 & 0 & 0 & 0 & \frac{1}{\sqrt{2}} & 0 & \frac{1}{\sqrt{2}}\\
0 & 0 & 0 & 0 & -\frac{1}{\sqrt{2}} & 0 & \frac{1}{\sqrt{2}} & 0\\
\frac{1}{2} & 0 & \frac{1}{2} & 0 & 0 & -\frac{1}{2} & 0 & \frac{1}{2}\\
0 & \frac{1}{\sqrt{2}} & 0 & \frac{1}{\sqrt{2}} & 0 & 0 & 0 &  0\\
-\frac{1}{\sqrt{2}} & 0 & \frac{1}{\sqrt{2}} & 0 & 0 & 0 & 0 & 0\\
0 & -\frac{1}{\sqrt{2}} & 0 & \frac{1}{\sqrt{2}} & 0 & 0 & 0 &  0\\
-\frac{1}{2} & 0 & -\frac{1}{2} & 0 & 0 & -\frac{1}{2} & 0 &
\frac{1}{2}
\end{array}\right).
$$
Let $B$ be the identity matrix.
Then $$XBY^{\dag}=\frac{1}{\sqrt{2}}\left(\begin{array}{cccccccc}
1 & 0 & 1 & 0 & 0 & 0 & 0 & 0\\
0 & 1 & 0 & 1 & 0 & 0 & 0 & 0\\
-1 & 0 & 1 & 0 & 0 & 0 & 0 & 0\\
0 & -1 & 0 & 1 & 0 & 0 & 0 & 0\\
0 & 0 & 0 & 0 & 1 & 0 & 1 &  0\\
0 & 0 & 0 & 0 & 0 & 1 & 0 &  1\\
0 & 0 & 0 & 0 & -1 & 0 & 1 &  0\\
0 & 0 & 0 & 0 & 0 & -1 & 0 &  1
\end{array}\right).$$ It is easy to verify that $rank(R(XBY^{\dag})_{1|23})=1$. Furthermore
$XBY=\left(\begin{array}{cc}
1 & 0 \\
0 & 1
\end{array}\right)\otimes\frac{1}{\sqrt{2}}\left(\begin{array}{cc}
1 & 1 \\
-1 & 1
\end{array}\right)\otimes\left(\begin{array}{cc}
1 & 0 \\
0 & 1
\end{array}\right)$.
Hence all nondegenerated mixed states $\rho_1$ and $\rho_2$ are SLOCC equivalent.

Now, we give one example for two quantum states non SLOCC
equivalence. In fact, there are too many examples for two quantum
states non SLOCC equivalence.

Example 4. Suppose
$|\psi\rangle_1=\frac{1}{\sqrt{2}}(|001\rangle)+|010\rangle$,
$|\psi\rangle_2=\frac{1}{\sqrt{2}}(|101\rangle)+|011\rangle$.

On one hand, the coefficient matrices of these two pure states have
different ranks, by Theorem 2, we can easily to determine that they are non SLOCC
equivalence. On the other hand, we can also check their non SLOCC equivalence by Theorem 3. Since these are pure states and their density matrices is rank one, therefore their density matrices 
have only one nonzero eigenvalue $1$. In this case, we
can choose $B$ as identity matrix, the $X$ and $Y$ can easily
respectively obtained. One has
$$XBY^{\dag}=\frac{1}{\sqrt{2}}\left(\begin{array}{cccccccc}
1 & 0 & 0 & 0 & 0 & 0 & 0 & 0\\
0 & 0 & 0 & 1 & 0 & 0 & 0 & 0\\
0 & 0 & 0 & 0 & 0 & 1 & 0 & 0\\
0 & 1 & 0 & 0 & 0 & 0 & 0 & 0\\
0 & 0 & 1 & 0 & 0 & 0 & 0 &  0\\
0 & 0 & 0 & 0 & 1 & 0 & 0 &  0\\
0 & 0 & 0 & 0 & 0 & 0 & 1 &  0\\
0 & 0 & 0 & 0 & 0 & 0 & 0 &  1
\end{array}\right).$$
It is easy to verify that $rank(R(XBY)_{1|23})\neq 1$. By Theorem 3,
they are non SLOCC equivalence.

\section{conclusions and remarks}

We have studied the SLOCC equivalence for arbitrary dimensional
multipartite quantum states. Utilizing coefficient matrix and
realignment, we present necessary and sufficient criteria for
multipartite pure states and mixed states respectively. These
conditions can be used to classify some SLOCC equivalent quantum
states having the same rank. Some detailed examples are given to
identify the SLOCC equivalence or non SLOCC equivalence. However, our methods have to recognize its disadvantage in determining the SLOCC equivalence for degenerate state. The reason is that the normalized eigenvectors of degenerate states can not be determined up to some unitary matrix. Thus the choose of unitary matrices X and Y in Eq.(10)  can not be determined up to some unknown unitary matrices, which takes infinite possibility. Therefore, to check Eq.(10) becomes terribly difficult since one should check all possible choices.

\bigskip
Acknowledgments:  We are very thankful to an anonymous referee for
the helpful comments. This work is supported by the NSF of China
under Grant No. 11401032 and No. 11501153; the NSF of Hainan
Province under Grant Nos.(20161006, 20151010, 114006); the
Scientific Research Foundation for Colleges of Hainan Province under
Grant No.Hnky2015-18; Scientific Research Foundation for the
Returned Overseas Chinese Scholars, State Education Ministry.


\begin{thebibliography}{99}

\bibitem{nils} M.A. Nielsen and I.L. Chuang, Quantum Computation and
Quantum Information, Cambridge University Press, 2000.
\bibitem{rpmk} R. Horodecki, P. Horodecki, M. Horodecki and K. Horodecki, Rev. Mod. Phys {\bf 81}, 865 (2009).
\bibitem{cgcr} C. H. Bennett, G. Brassard, C. Crepeau, R. Jozsa, A. Peres, and ¡ä
W. K. Wootters, Phys. Rev. Lett. {\bf 70}, 1895 (1993).
\bibitem{dsfl} D. Boschi, S.
Branca, F. De Martini, L. Hardy, and S. Popescu, Phys. Rev. Lett.
{\bf 80}, 1121 (1998).
\bibitem{chsj} C. H. Bennett and S. J. Wiesner, Phys. Rev.
Lett. {\bf 69}, 2881 (1992).
\bibitem{khpa} K. Mattle, H. Weinfurter, P. G. Kwiat,
and A. Zeilinger, Phys. Rev. Lett. {\bf 76}, 4656 (1996).
\bibitem{ddeu} D. Deutsch
and R. Jozsa, Proc. R. Soc. London, Ser. A {\bf 439}, 553 (1992).
\bibitem{pwsh} P.
W. Shor, Proceedings of the 35th Annual Symposium on the Foundations
of Computer Science (IEEE Computer Society Press, Los Alamitos, CA,
1994), p. 124
\bibitem{aerj} A. Ekert and R. Jozsa,Rev. Mod. Phys. {\bf 68}, 733
(1996).
\bibitem{lkgr} L.K. Grover, Phys. Rev. Lett. {\bf 79}, 325 (1997).
\bibitem{chgb} C.H. Bennett and G. Brassard, Proceedings of IEEE International
Conference on Computers, Systems, and Signal Processing, Bangalore,
India (IEEE, New York, 1984), pp. 175¨C179.
\bibitem{ngwh} N. Gisin, G.
Ribordy, W. Tittel, and H. Zbinden, Rev. Mod. Phys. {\bf 74}, 145
(2002)
\bibitem{csdj} C.H. Bennett, S. Popescu, D. Rohrlich, J.A. Smolin
and A.V. Thapliyal, Phys. Rev. A {\bf 63}, 012307 (2000).
\bibitem{wgji} W. Dur, G. Vidal, and J. I. Cirac, Phys. Rev. A {\bf 62}, 062314
(2000).
\bibitem{adma} A. Ac\'in D. Bruss, M. Lewenstein and A. Sanpera,
Phys. Rev. Lett. {\bf 87}, 040401 (2001).
\bibitem{cejs} C. Eltschka and J. Siewert, Phys. Rev. Lett. {\bf
108}, 020502 (2012)
\bibitem{fjbh} F. Verstraete, J. Dehaene, B. De Moor, and H. Verschelde, Phys.
Rev. A {\bf 65}, 052112 (2002).
\bibitem{lcyx}L. Chen and Y.X. Chen, Phys. Rev. A {\bf 73}, 052310 (2006).
\bibitem{jscf} J.L. Li, S.Y. Li, and C.F. Qiao, Phys. Rev. A {\bf 85}, 012301
(2012).
\bibitem{blhf} B. Li, L.C. Kwek, and H. Fan, J. Phys. A:
Math. Theor. {\bf 45}, 505301 (2012).
\bibitem{ocjs} O. Viehmann, C.
Eltschka, and J. Siewert, Phys. Rev. A {\bf 83}, 052330 (2011).
\bibitem{lili} X.R. Li and D. F. Li, Phys. Rev. Lett. {\bf 108}, 180502
(2012);
 X.R. Li and D. F. Li, Phys. Rev. A {\bf 86}, 042332 (2012).
 \bibitem{syjj}S.H. Wang, Y. Lu, M. Gao, J.L. Cui, and J.L. Li, J. Phys. A:
Math. Theor. {\bf 46}, 105303 (2013); S.H. Wang, Y. Lu and G.L.
Long, Phys. Rev. A {\bf 87}, 062305 (2013)
\bibitem{ggnr}G. Gour and N. R. Wallach, Phys. Rev. Lett. {\bf 111},
060502 (2013)
\bibitem{nmxt} N. Jing, M. Li, X. Li-Jost, T. Zhang
and S.M. Fei, J. Phys. A: Math. Theor. {\bf 47}  215303 (2014)

\bibitem{racr}R.A. Horn and C.R. Johnson, Topics in matrix Analysis, (Cambridge university, Cambridge England, 1991)
\bibitem{kcla} K. Chen and L.A. Wu, Quant.Inf.Comp. {\bf 3}, 193 (2003)
\bibitem{lljc} L.L. Sun, J.L. Li and C.F. Qiao, arxiv.1401.6609v1. (2014)
\bibitem{rmhh} R. Horodecki and M. Horodecki, Phys. Rev. A {\bf 54},
1838 (1996)
\bibitem{orud} O. Rudolph, J. Math. Phys. {\bf 45}, 4035 (2004)

\end{thebibliography}
\end{document}